\newtheorem{definition}{Definition}
\newtheorem{remark}{Remark}
\newtheorem{example}{Example}
\newtheorem{thm}{Theorem}
\newtheorem{prop}{Proposition}
\newcommand{\abs}[1]{\left\vert#1\right\vert}
\newcommand{\norm}[1]{\left\Vert#1\right\Vert}
\newcommand{\set}[1]{\left\{#1\right\}}
\newcommand{\Real}{\mathbb R}
\newcommand{\Z}{\mathbb Z}
\newcommand{\U}{\mathcal U}
\renewcommand{\u}{\mathbf u}
\newcommand{\T}{\mathcal T}
\newcommand{\Q}{\mathcal Q}
\begin{document}

\title{\Large\bfseries Computing finite abstractions with robustness margins via local reachable set over-approximation}

\author{Yinan Li, Jun Liu, and Necmiye Ozay%
\thanks{This work is supported, in part, by EU FP7 Grant PCIG13-GA-2013-617377 and NSF grant CNS-1446298. This is an extended version of the paper \cite{li2015computing} to appear in the Proceedings of the 2015 IFAC Conference on Analysis and Design of Hybrid Systems (ADHS), Atlanta, GA, USA, October 14-16, 2015.}
\thanks{Yinan Li and Jun Liu are with the Department of Automatic Control and System Engineering, University of Sheffield, Sheffield, S1 3JD, UK (e-mails: yli107@sheffield.ac.uk; j.liu@sheffield.ac.uk).}
\thanks{Necmiye Ozay is with the Department of Electrical Engineering and Computer Science, University of Michigan, Ann Arbor, MI 48109, USA (e-mail: necmiye@umich.edu).}}

\maketitle

\begin{abstract}                
This paper proposes a method to compute finite abstractions that can be used for synthesizing robust hybrid control strategies for nonlinear systems. Most existing methods for computing finite abstractions utilize some global, analytical function to provide bounds on the reachable sets of nonlinear systems, which can be conservative and lead to spurious transitions in the abstract systems. This problem is even more pronounced in the presence of imperfect measurements and modelling uncertainties, where control synthesis can easily become infeasible due to added spurious transitions. To mitigate this problem, we propose to compute finite abstractions with robustness margins by over-approximating the local reachable sets of nonlinear systems. We do so by linearizing the nonlinear dynamics into linear affine systems and keeping track of the linearization error. It is shown that this approach provides tighter approximations and several numerical examples are used to illustrate of effectiveness of the proposed methods.

\end{abstract}

\begin{IEEEkeywords}
Nonlinear systems, temporal logic, control synthesis, reachable set computation.
\end{IEEEkeywords}


\section{Introduction}

Construction of finite abstractions for nonlinear systems is a critical step when applying abstraction-based approaches to hybrid control synthesis \cite{Alur00}. Such approaches have gained popularity over the past few years for their ability to handle control problems for complex dynamical systems from high-level, rigorous specifications (see, e.g., piecewise affine systems \cite{Kloetzer08,YordanovTCBB12}, polynomial and nonlinear switched systems \cite{OzayLPM13,LiuOTM13}.
The underlying principle of such approaches is to search for a controller in a finite abstraction of the original continuous system, leveraging formal synthesis techniques developed in computer science. As a result, the fidelity of finite abstractions has a significant influence on the result of control synthesis.

Symbolic models that are approximately similar or bisimilar to continuous-time nonlinear systems have been proposed and studied extensively \cite{PolaGT08,GirardPT10,ZamaniPMT12,TabuadaBook09}, which provide concrete means for computing finite approximate models often based on state-space discretization. For example, the symbolic models proposed in \cite{PolaGT08} and \cite{GirardPT10} are based on approximate bisimulation relations, which require incremental input-to-state stability \cite{Angeli02} of the original system. The work by \cite{ZamaniPMT12} later relaxes the stability  requirement and constructs symbolic models that are essentially approximately alternatingly similar to the original system. Such symbolic models are nondeterministic and the computation of transitions relies on a global, analytical function provided by the incremental forward completeness of dynamics \cite{ZamaniPMT12}.

When dynamical systems are affected by imperfections such as measurement errors, delays, and disturbances, synthesis of robust control strategies using abstraction-based approaches becomes important. Motivated by this, the work by \cite{LiuO14} introduces a notion of finite abstractions that are equipped with additional robustness margins to account for imperfections in measurements and/or models. These margins also lead to added nondeterminism in the abstractions.

To increase the fidelity of the nondeterminitic finite abstractions, one needs to reduce the number of spurious transitions in the abstractions. One way to do so is to compute tighter approximations of the local reachable sets for nonlinear systems. While local reachable set computation has been used for nonlinear system analysis and verification (see, e.g., \cite{Althoff10,Althoff14}), we use it here to compute finite abstractions for robust control synthesis. More specifically, we linearize the nonlinear dynamics and keep track of the linearization errors. Robustness margins are incorporated in the set of initial conditions used for computing local reachable sets. This allows us to use margins that are are state-dependent and take into account variations in local dynamics. One major advantage of the proposed approach is that it provides much less conservative abstractions, compared with existing approaches.

\textbf{Notation}: let $\mathbb{Z}$ be the set of integers and $\mathbb{N}$ be the set of all nonnegative integers; $\mathbb{R}$ represents the set of all real numbers; $\mathbb{R}_{\geq 0}$ and $\mathbb{R}_{>0}$ are the sets of all nonnegative and all positive real numbers, respectively; $\mathbb{R}^n$ denotes the $n$-dimensional Euclidean space; $\Z^n$ denotes the $n$-dimensional integer lattice (the set of vectors in $\mathbb{R}^n$ whose components are all integers); given a vector $x=(x_1,\cdots,x_n)$ in $\Real^n$, let $\abs{x}=(\abs{x_1},\cdots,\abs{x_n})$, i.e., the vector obtained by taking entrywise absolute value of $x$; given two vectors $x=(x_1,\cdots,x_n)$ and $y=(y_1,\cdots,y_n)$, $x\le y$ means $x_i\le y_i$ for all $i\in \set{1,\cdots,n}$ ($x<y$, $x>y$, and $x\ge y$ are similarly defined) and $x\circ y$ indicates the entrywise product, i.e., $x\circ y:=(x_1y_1,\cdots,x_ny_n)$; a vector $x\in\Real^n$ is said to be positive if $x>0\in\Real^n$ and nonnegative if $x\ge 0\in\Real^n$; let $\mathbb{R}^n_{>0}$ and $\mathbb{R}^n_{\ge 0}$ denote the set of positive and nonnegative vectors in $\Real^n$; given vectors $\delta\in \mathbb{R}^n_{\ge 0}$ and $x\in\Real^n$, define $B_{\delta}(x):=\set{x'\in\Real^n:\,\abs{x'-x}\le\delta}$, a hyper-rectangular box centred at $x$; $\mathcal{B}_\delta(0)$ is written as $\mathcal{B}_\delta$ for short; given $\eta \in \mathbb{R}^n_{\ge 0}$, define $[\mathbb{R}^n]_{\eta}:=\set{\eta\circ k\in\Real^n:\,k\in\mathbb{Z}^n}$ to be a hyper-rectangular grid with granularity parameter $\eta$; given a set $S \subseteq \mathbb{R}^n$ and a vector $\eta \in \mathbb{R}^n_{\ge 0}$, define $[S]_{\eta}:= S\cap [\mathbb{R}^n]_{\eta}$ to be the set of all grid points in $S$; given two sets $X\subseteq \Real^n$ and $Y\subseteq \Real^n$, $X \oplus Y$ denotes their Minkowski addition defined as $X \oplus Y:= \{ x+y |\; x \in X,\, y \in Y\}$; given a function $f$, dom$(f)$ denotes its domain.

\section{Problem formulation}\label{sec:prob}

\subsection{Continuous-time control system}

We consider a continuous-time control system described by a tuple $\T:=(X,X_0,U,f,\Pi,L)$, whose execution is governed by the ordinary differential equation with inputs
\begin{equation}
\dot{x}(t)=f(x(t),\,u(t)),\label{eq:1}
\end{equation}
where $t \in \mathbb{R}_{\geq 0}$, $x(t) \in X \subseteq \mathbb{R}^n$ is the system state, $x(0)\in X_0\subseteq \mathbb{R}^n$ is the initial state, and $u(t) \in U \subseteq \mathbb{R}^m$ is the control input. A measurable locally essentially bounded function defined on $[0,\tau]$ taking values in $U$ is called a \emph{control signal} of duration $\tau$. Let $\mathcal{U}$ be the set of all control signals with arbitrary but finite duration. The vector field $f:\mathbb{R}^n \times \mathbb{R}^m \to \mathbb{R}^n$ is a continuous function that fulfills the basic conditions  (see, e.g., \cite{KhalilBook02}) for existence and uniqueness of solutions:  given $x_0 \in X$, $T \in \mathbb{R}_{\geq 0}$, and a control signal $\mathbf{u}$ of duration $T$, there exists a unique solution, denoted by $\xi(t,\,x_0,\,\mathbf{u})$, that satisfies (\ref{eq:1}) for $t\in[0,T]$ and the initial condition $x(0)=x_0$. The labeling function $L:X \to 2^\Pi$ is function that maps a state of $\T$ to a set of propositions in $\Pi$ that hold true at this state.

\subsection{LTL control synthesis problem}

The desired system behaviors for $\T$ are specified using linear temporal logic (LTL). LTL is able to express a combination of safety, reachability, invariance properties. It is built upon the set of atomic propositions $\Pi$, logical operators $\neg$ (negation), $\wedge$ (conjunction) and temporal operators $\bigcirc$ (next), $\mathbf{U}$ (until). An LTL formula $\varphi$ is formed by connecting a finite set of atomic propositions with these operators. In this paper, we use a stutter-invariant fragment of LTL (denoted by $\text{LTL}_{\setminus \bigcirc}$), which excludes operation $\bigcirc$. The synthex of $\text{LTL}_{\setminus \bigcirc}$ can be found in \cite{ClarkeBook00}. We also assume that all $\text{LTL}_{\setminus \bigcirc}$ formulas have been transformed into negation normal form \cite[p.~132]{ClarkeBook00}, by adding the operator $\mathbf{R}$ (release) and replacing any negations of atomic propositions with new atomic propositions.

\emph{$\text{LTL}_{\setminus \bigcirc}$ semantics for continuous trajectories}: Let $\xi$ be a continuous-time trajectory defined on $ \mathbb{R}_{\geq 0}$ and $\varphi$ be a $\text{LTL}_{\setminus \bigcirc}$ formula. Let $\xi[t]$ denote the state at time $t$, and $\xi[t,\infty)$ denotes the part of the trajectory in $[t,\infty), t \geq 0$. Then the semantics of $\xi$ satisfying $\phi$, denoted by $\xi \models \varphi$, is defined as follows:
\begin{itemize}
\item $\xi \models \pi, \pi \in \Pi$, iff $\pi \in L(\xi[t_0])$;
\item $\xi \models \varphi_1 \wedge \varphi_2$ iff $\xi \models \varphi_1$ and $\xi \models \varphi_2$;
\item $\xi \models \varphi_1 \vee \varphi_2$ iff $\xi \models \varphi_1$ or $\xi \models \varphi_2$;
\item $\xi \models \varphi_1 \mathbf{U} \varphi_2$ iff there exists $t'> 0$ such that $\xi[t',\infty) \models \varphi_2$ and $\xi[t'',\infty) \models \varphi_1$ for all $t'' \in [0,t')$;
\item $\xi \models \varphi_1 \mathbf{R} \varphi_2$ iff for all $t'> 0$ either $\xi[t',\infty) \models \varphi_2$ or there exists $t''\in [0,t')$ such that $\xi[t'',\infty) \models \varphi_1$.
\end{itemize}

Assume the system state $x_k$ is measured at time $t_k$ with $t_0=0, 0 \leq t_k < t_{k+1}, k \in \mathbb{N}$. A \textit{continuous control strategy} is defined as a function $\sigma:x_0,\cdots,x_i \to \mathbf{u}_i$ that generates a control signal $\mathbf{u}_i \in \mathcal{U}$ for the horizon $[t_i,t_{i+1})$ according to the history of states $x_0,\cdots,x_i$.

We are now ready to formulate the main control synthesis problem this paper aims to address.

\textbf{\textit{Continuous Synthesis Problem}}: Given a continuous-time control system $\T$ and an $\text{LTL}_{\setminus \bigcirc}$ specification $\varphi$, find a nonempty set of initial states $X_0$ and a control strategy $\sigma$ such that the resulting solutions of $\T$ satisfy $\varphi$. The specification $\varphi$ is said to be \emph{realizable} for $\T$ if such $X_0$ exists.

\section{Finite Abstractions with Robustness Margins}

This section is devoted to formally defining a notion of abstractions useful for solving robust control synthesis problems and proving their correctness and robustness guarantees when solving the continuous synthesis problem by discrete synthesis using these abstractions.

\subsection{Finite abstractions with robustness margins}

In \cite{LiuO14}, the authors introduced a notion of finite abstractions with additional robustness margins that can effectively handle a range of robustness related issues in control synthesis, including modelling uncertainty, measurement errors, and jitter or delays in control signals.

This paper aims to improve its computational procedure in two aspects. First, we define the finite abstractions with a varying (state-dependent) robustness margins while \cite{LiuO14} use fixed margins which are often conservatively chosen to cope with the worst case. Second, we construct transitions by way of local reachable set computation while the results in \cite{LiuO14} rely on a global analytical bound that can lead to spurious transitions being added due to variation in local dynamics.

To this end, we shall formally define the notion of finite abstractions with robustness margins using reachable set.

\begin{definition} \label{def:reach}
Given a control signal $\mathbf{u}\in\U$ of duration $\tau$ and a set of initial states $X_0$, the \textit{reachable set} for system (\ref{eq:1}) at time $\tau$ under this control signal $\mathbf{u}$ is defined by
\begin{equation*}
\mathcal{R}_{\mathbf{u},\,X_0}(\tau):=\{ \xi(\tau,\,x_0,\,\mathbf{u})|\, x_0 \in X_0\}.
\end{equation*}
The \textit{reachable tube} for system (\ref{eq:1}) over the interval $[0,\,\tau]$ is the union of all reachable sets during this time interval, which is
\begin{equation*}
\mathcal{R}_{\mathbf{u},\,X_0}([0,\,\tau]):=\bigcup_{t \in [0,\,\tau]}\{ \xi(t,\,x_0,\,\mathbf{u})|\, x_0 \in X_0\}.
\end{equation*}
With a fixed $u\in U$ and $\tau\in\Real_{>0}$, $\mathcal{R}_{u,\,X_0}(\tau)$ and $\mathcal{R}_{u,\,X_0}([0,\,\tau])$ are interpreted as $u$ being a constant control signal on $[0,\tau]$.
\end{definition}

We are now ready to define finite abstractions with robustness margins using reachable set.

\begin{definition} \label{def:abs}

Given $\delta \in \mathbb{R}^n_{>0}$ and functions $\Gamma_i:\,X \to \mathbb{R}^n_{\geq 0}, i=1,2$, a finite transition system
$$\hat{\mathcal{T}}:=(\hat{\mathcal{Q}},\,\hat{\mathcal{Q}}_0,\,\hat{\mathcal{A}},\,\rightarrow_{\hat{\mathcal{T}}},\,\hat{\Pi},\,\hat{L})$$
is said to be a $(\Gamma_1,\Gamma_2,\delta)$-\textit{abstraction} of the continuous-time control system $\T=(X,X_0,U,f,\Pi,L)$, denoted by $\mathcal{T} \preceq_{(\Gamma_1,\Gamma_2,\delta)} \hat{\mathcal{T}}$, if there exists an abstraction map $\Omega:\, X \to \mathcal{\hat{Q}}$ such that
\begin{itemize}
\item $\hat{\mathcal{Q}}$ is a finite subset of $X$;
\item $\hat{\mathcal{Q}}_0=\bigcup_{x\in X_0}\set{\Omega(x)}$;
\item $\hat{\mathcal{A}}$ is a finite subset of $\mathcal{U}$;
\item $(\hat{q},\,\hat{\mathbf{u}},\,\hat{q}') \in \rightarrow_{\hat{\mathcal{T}}}$ if, under $\hat{\mathbf{u}} \in \hat{\mathcal{A}}$ with duration $\tau$, $\hat{q}$ and $\hat{q}'$ satisfy
$$\big(\Omega^{-1}(\hat{q}') \oplus \mathcal{B}_{\Gamma_2(\hat{q}')}\big) \cap \mathcal{R}_{\mathbf{\hat{u}},\,\Omega^{-1}(\hat{q}) \oplus \mathcal{B}_{\Gamma_1(\hat{q})}}(\tau) \neq \varnothing;$$
\item $\hat{L}:\hat{\mathcal{Q}} \to 2^{\hat{\Pi}}$ is defined by $\hat{L}(\hat{q})=\cap_{x \in \mathcal{B}_{\delta}(\hat{q}) \cap X} L(x)$, $\hat{\Pi}=\Pi$.
\end{itemize}
\end{definition}

The parameter $\delta$ is used to guarantee that specifications are satisfied  even if the controller is synthesized using a finite abstraction with approximation errors. The functions $\Gamma_{1,2}$ provide additional robustness margins that varies with respect to local dynamics to account for imperfections such as system delay, measurement or modelling errors, at the price of increasing the nondeterminism in the abstraction.

\begin{example}
A common and practical type of imperfections involves delays and measurement errors (e.g., noise or quantization). Consider the system $\T$ with a continuous control strategy $\sigma$ subjects to a measurement delay $h(t)\in [0,\Delta]$, $\Delta \in \mathbb{R}_{\geq 0}$, and an error $e(t)$ with $|e(t)|\leq \varepsilon \in \mathbb{R}^n_{\geq 0}$, the system dynamics becomes
\begin{equation} \label{eq:1b}
\left\{
\begin{aligned}
\dot{x}(t)&=f(x(t),\,\mathbf{u}_i),\, \mathbf{u}_i=\sigma(\hat{x}(t_0),\cdots,\hat{x}(t_i)),\\
\hat{x}(t_i)&=x(t_i-h(t_i))+e(t_i).
\end{aligned}
\right.
\end{equation}
where $\hat{x}$ denotes the measurement of system states, $t \in [t_i,t_{i+1}), t_0=0, t_i<t_{i+1}, i \in \mathbb{N}$ and $\tau_i=t_{i+1}-t_i$ is the time duration of $\mathbf{u}_i$.

\end{example}

\subsection{Discrete synthesis problem}

An $\text{LTL}_{\setminus \bigcirc}$ formula can be interpreted over paths of $\hat{\mathcal{T}}$. A \emph{path} of $\hat{\mathcal{T}}$ is a sequence of states $\hat{\rho}=\hat{q}_0\hat{q}_1\hat{q}_2\cdots$ under the the corresponding action $ \hat{a}_i \in \hat{\mathcal{A}}$ at each state $\hat{q}_i \in \hat{\mathcal{Q}}$ while satisfying $(\hat{q}_i,\,\hat{a}_i,\,\hat{q}_{i+1}) \in \rightarrow_{\hat{\mathcal{T}}},\, i \in \mathbb{N}$.

\emph{$\text{LTL}_{\setminus \bigcirc}$ semantics for discrete sequences}: Let $\rho=q_0q_1q_2\cdots$ be an infinite discrete sequence and $\varphi$ be an $\text{LTL}_{\setminus \bigcirc}$ formula. Let $\rho[i,\infty)$ denote the subsequence $q_iq_{i+1}\cdots, i \in \mathbb{N}$. Then semantics of $\rho$ satisfying $\varphi$, denoted by $\rho \models \varphi$, is defined as follows:
\begin{itemize}
\item $\rho \models \pi$, $\pi \in \Pi$, iff $\pi \in L(q_0)$;
\item $\rho \models \varphi_1 \wedge \varphi_2$ iff $\rho \models \varphi_1$ and $\rho \models \varphi_2$;
\item $\rho \models \varphi_1 \vee \varphi_2$ iff $\rho \models \varphi_1$ or $\rho \models \varphi_2$;
\item $\rho \models \varphi_1 \mathbf{U} \varphi_2$ iff there exists $j\geq 0$ such that $\rho[j,\infty) \models \varphi_2$ and $\rho[k,\infty) \models \varphi_1$ for all $0 \leq k < j$;
\item $\rho \models \varphi_1 \mathbf{R} \varphi_2$ iff for all $j\geq 0$ either $\rho[j,\infty) \models \varphi_2$ or there exists some $0\le k<j$ such that $\rho[k,\infty) \models \varphi_1$.
\end{itemize}

Similar to continuous control strategy, a \textit{discrete control strategy} for $\hat{\mathcal{T}}$ is a function $\hat{\sigma}:\hat{q}_0,\cdots,\hat{q}_i \to \hat{a}_i$ that maps the history path to a control action. Then we formulate the discrete synthesis problem as follows.

\textbf{\textit{Discrete Synthesis Problem}} Given a finite transition system $\hat{\mathcal{T}}$ and an $\text{LTL}_{\setminus \bigcirc}$ specification $\varphi$, find a nonempty set of initial states $\hat{X}_0$ and a control strategy $\hat{\sigma}$ such that any resulting path satisfies $\varphi$. If such $\hat{X}_0$ exists, then $\varphi$ is said to be realizable for $\hat{\mathcal{T}}$.

\subsection{Correctness and robustness guarantees}

In general, the existence of a discrete control strategy $\hat{\sigma}$ that solves the discrete synthesis problem with an $\text{LTL}_{\setminus \bigcirc}$ specification $\varphi$ does not guarantee that a control strategy exists for the continuous synthesis problem with the same specification.

As indicated in Definition \ref{def:abs}, $\hat{\mathcal{T}}$ requires the same propositions of $\T$ to hold within a neighbourhood of radius $\delta$, which is more restrictive. This is because the discrete strategy only guarantees that a sequence of sampled states satisfy a given specification and the parameter $\delta$ accounts for the possible mismatches of the inter-sample states. In addition, the robustness margin functions $\Gamma_i$ ($i=1,2$) are chosen to account for possible imperfections.

To formally reason about the correctness and robustness guarantees of solving the continuous synthesis problem by discrete synthesis using finite abstractions with robustness margins, the following theorem gives a sufficient condition for the realizability of the continuous synthesis problem by the realizability of the discrete synthesis problem.

\begin{thm} \label{thm1}
Given a continuous-time control system $\mathcal{T}$, its $(\Gamma_1,\Gamma_2,\delta)$-abstraction $\hat{\mathcal{T}}$, and an $\text{LTL}_{\setminus \bigcirc}$ formula $\varphi$,
\begin{enumerate}
\item[(i)] (correctness) $\varphi$ being realizable for $\hat{\mathcal{T}}$ implies that  $\varphi$ is realizable for $\mathcal{T}$, provided that, for all $(\hat{q},\,\hat{\mathbf{u}},\,\hat{q}') \in \rightarrow_{\hat{\mathcal{T}}}$,
\begin{equation} \label{eq:thm1}
\mathcal{R}_{\hat{\mathbf{u}},\Omega^{-1}(\hat{q}) \oplus \mathcal{B}_{\Gamma_1(\hat{q})}}(\text{dom}(\hat{\mathbf{u}}))\subseteq \mathcal{B}_\delta(\hat{q}).
\end{equation}
In particular, if $\hat{\mathcal{T}}$ satisfies $\varphi$ with $\hat{\sigma}$ and $\hat{\mathcal{Q}}_0$, then $\varphi$ is realizable for $\mathcal{T}$ using $X_0=\cup_{q\in \hat{\mathcal{Q}}_0}\Omega^{-1}(q)$ and $\sigma(x_0,\cdots,x_i)=\hat{\sigma}(\Omega(x_0),\cdots,\Omega(x_i))$ where $x_0,\cdots,x_i$ is the sequence of measured states.
\item[(ii)] (robustness) if the system is subjected to measurement delays and errors defined in (\ref{eq:1b}), then the same statement holds true, provided additionally that the robustness margins $\Gamma_{i}$ ($i=1,2$) satisfy that, for all $\hat{\mathbf{v}} \in \hat{\mathcal{A}}$ and $\hat{q}\in\hat{\Q}$, $\Gamma_2(\hat{q}) \geq \varepsilon$ and
\begin{equation} \label{eq:thm2}
\mathcal{R}_{\hat{\mathbf{v}},\Omega^{-1}(\hat{q}) \oplus \mathcal{B}_\varepsilon}([0,\Delta]) \subseteq \Omega^{-1}(\hat{q}) \oplus \mathcal{B}_{\Gamma_1(\hat{q})}.
\end{equation}
\end{enumerate}
\end{thm}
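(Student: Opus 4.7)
The plan is to reduce the continuous synthesis problem to the discrete one by constructing, from any winning discrete strategy $\hat{\sigma}$, a continuous strategy $\sigma$ whose closed-loop trajectories have the property that the sample sequence projects via $\Omega$ onto a valid path of $\hat{\mathcal{T}}$ driven by $\hat{\sigma}$, while the continuous trajectory between consecutive samples inherits the abstract labelling strongly enough to preserve $\text{LTL}_{\setminus \bigcirc}$ satisfaction.

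For part (i), I would induct on the sample index $i$ to show that, starting from any $x_0\in X_0=\bigcup_{q\in\hat{\mathcal{Q}}_0}\Omega^{-1}(q)$, the closed-loop sample sequence $x_0,x_1,\ldots$ under $\sigma$ induces a path $\hat{q}_i:=\Omega(x_i)$ of $\hat{\mathcal{T}}$ under $\hat{\sigma}$. The base case is immediate from the choice of $X_0$. For the inductive step, $x_i\in\Omega^{-1}(\hat{q}_i)\subseteq\Omega^{-1}(\hat{q}_i)\oplus\mathcal{B}_{\Gamma_1(\hat{q}_i)}$, so $x_{i+1}=\xi(\tau_i,x_i,\hat{\mathbf{u}}_i)$ lies simultaneously in $\mathcal{R}_{\hat{\mathbf{u}}_i,\Omega^{-1}(\hat{q}_i)\oplus\mathcal{B}_{\Gamma_1(\hat{q}_i)}}(\tau_i)$ and in $\Omega^{-1}(\hat{q}_{i+1})\subseteq\Omega^{-1}(\hat{q}_{i+1})\oplus\mathcal{B}_{\Gamma_2(\hat{q}_{i+1})}$, witnessing $(\hat{q}_i,\hat{\mathbf{u}}_i,\hat{q}_{i+1})\in\,\rightarrow_{\hat{\mathcal{T}}}$. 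Between samples, condition (\ref{eq:thm1}) forces $\xi(t_i+t,x_0,\sigma)\in\mathcal{B}_\delta(\hat{q}_i)$ for all $t\in[0,\tau_i]$, so by the definition of $\hat{L}$ every proposition in $\hat{L}(\hat{q}_i)$ also holds along the continuous segment. Hence the continuous trace is a stutter-expansion of the discrete trace whose labelling is only enriched; combined with the NNF assumption (so that satisfaction of $\varphi$ is monotone in the labelling) and stutter-invariance of $\text{LTL}_{\setminus \bigcirc}$, this yields $\xi\models\varphi$, with the witnesses for $\mathbf{U}$ and $\mathbf{R}$ obtained by taking $t':=t_j$ whenever $j$ witnesses the discrete satisfaction.

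For part (ii), the same induction is carried out but with the strengthened invariant $x(t_i)\in\Omega^{-1}(\hat{q}_i)\oplus\mathcal{B}_{\Gamma_1(\hat{q}_i)}$, where $\hat{q}_i:=\Omega(\hat{x}(t_i))$ is determined from the delayed and noisy measurement. The hypothesis $\Gamma_2(\hat{q}_i)\ge\varepsilon$ ensures $\hat{x}(t_i)\in\Omega^{-1}(\hat{q}_i)\oplus\mathcal{B}_{\Gamma_2(\hat{q}_i)}$ on the receiving end, while the bound on $e(t_i)$ gives $x(t_i-h(t_i))=\hat{x}(t_i)-e(t_i)\in\Omega^{-1}(\hat{q}_i)\oplus\mathcal{B}_\varepsilon$; condition (\ref{eq:thm2}) applied with the previously active control $\hat{\mathbf{u}}_{i-1}$ and the elapsed delay $h(t_i)\in[0,\Delta]$ then lifts this to $x(t_i)\in\mathcal{R}_{\hat{\mathbf{u}}_{i-1},\Omega^{-1}(\hat{q}_i)\oplus\mathcal{B}_\varepsilon}([0,\Delta])\subseteq\Omega^{-1}(\hat{q}_i)\oplus\mathcal{B}_{\Gamma_1(\hat{q}_i)}$, closing the invariant. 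Propagating $x(t_i)$ under $\hat{\mathbf{u}}_i$ places $x(t_{i+1})$ into the reachable set used in the transition definition, and $\Gamma_2\ge\varepsilon$ again certifies the receiving side of the next transition via the $\varepsilon$-perturbed measurement $\hat{x}(t_{i+1})$. The continuous labelling argument of part (i) then applies verbatim.

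The main obstacle is the translation of discrete $\text{LTL}_{\setminus \bigcirc}$ satisfaction into continuous satisfaction, because the continuous trajectory is labeled by $L$ (not $\hat{L}$) and occupies nontrivial intervals between samples. I would formalize this by a structural induction on $\varphi$ relying on exactly the two features of $\text{LTL}_{\setminus \bigcirc}$ in NNF: monotonicity in the labelling, so that the enriched continuous labels cannot invalidate any subformula, and stutter-invariance, so that the time-stretched continuous trace satisfies the same formulas as its sampled sequence. A secondary but bookkeeping-heavy point in part (ii) is aligning the intervals $[t_i-h(t_i),t_i]$ with the control signal active on them, so that (\ref{eq:thm2}) is invoked with the correct $\hat{\mathbf{v}}\in\hat{\mathcal{A}}$; once the invariant is set up this step is mechanical.
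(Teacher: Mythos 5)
Your proposal is correct and follows essentially the same route as the paper's proof: the same lifted strategy $\sigma=\hat\sigma\circ\Omega$, the same two-stage argument (first showing the sampled sequence projects to a valid path of $\hat{\mathcal T}$, then transferring $\hat\rho\models\varphi$ to $\xi\models\varphi$ by structural induction using the tube containment \eqref{eq:thm1} and the intersection definition of $\hat L$), and for part (ii) the same invariant $x\in\Omega^{-1}(\hat q_i)\oplus\mathcal B_{\Gamma_1(\hat q_i)}$ at control-activation time obtained from \eqref{eq:thm2}, with $\Gamma_2\ge\varepsilon$ certifying the receiving end of each transition. The only cosmetic difference is your timing convention for the delay (measuring backward from $t_i$ rather than activating at $t_i+h(t_i)$), which is an equivalent relabeling.
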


\begin{proof}
(i) The realizability of $\varphi$ for $\hat{\T}$ implies that there exists an initial set $\hat{\mathcal{Q}}_0$ and a discrete control strategy $\hat{\sigma}$ for $\hat{\mathcal{T}}$ such that all the possible controlled paths from any initial state in $\hat{\mathcal{Q}}_0$ satisfies $\varphi$ (note that $\hat{\mathcal{T}}$ is nondeterministic). We need to show the realizability of $\varphi$ for $\T$. For this purpose, we define an initial set $X_0=\cup_{q\in \hat{\mathcal{Q}}_0}\Omega^{-1}(q)$ and a continuous control strategy by
$$\sigma(x_0,\cdots,x_i)=\hat{\u}_i=\hat{\sigma}(\Omega(x_0),\cdots,\Omega(x_i)),$$
where $x_0,\cdots,x_i$ is a sequence of measured states. We write $\hat{q}_i=\Omega(x_i)$ for all $i\ge 0$ and apparently $\hat{q}_0 \in \hat{\mathcal{Q}}_0$. In addition, we denote by $\tau_i$ the duration of $\hat{\u}_i$ and let $t_0=0,t_i=\sum_{k=0}^{i-1}\tau_k,i=1,2,\cdots$. Denote by $\xi$ the trajectory of $\T$ starting from $x_0$ under the control strategy $\sigma$ and by $\hat{\rho}$ the path $\hat{q}_0\hat{q}_1\hat{q}_2\cdots$. This correspondence is illustrated by the diagram below:
\begin{center}
\includegraphics[scale=0.5]{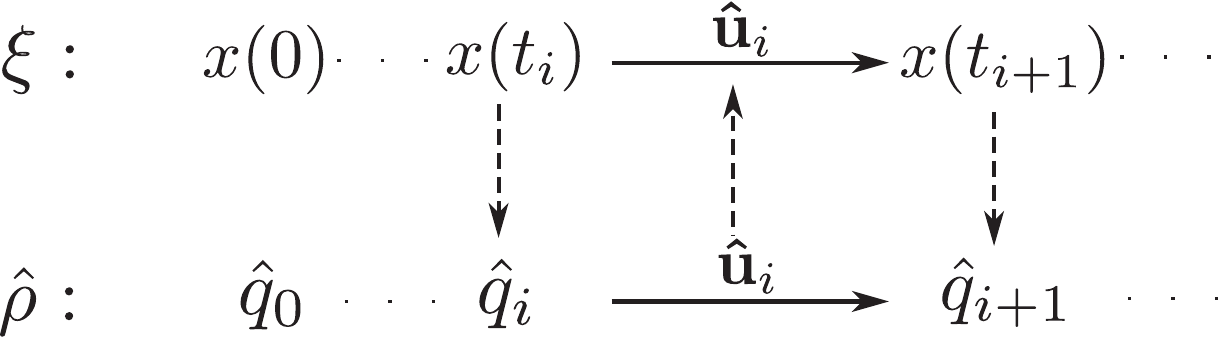}
\end{center}

The proof consists of two steps: (A) to show that the path $\hat{\rho}=\hat{q}_0\hat{q}_1\hat{q}_2\cdots$ is a valid path in $\hat{\T}$ and, as a result, $\hat{\rho}\models \varphi$; (B) to show from $\hat{\rho}\models \varphi$ that $\xi \models \varphi$.

To show (A), note that, since $x(t_i) \in \Omega^{-1}(\hat{q}_i)$ for all $i\ge 0$, we have
$$x(t_{i+1}) \in \mathcal{R}_{\hat{\mathbf{u}}_i,\Omega^{-1}(\hat{q}_i) \oplus \mathcal{B}_{\Gamma_1(\hat{q}_i)}}(\tau_i).$$
It follows from the definition of the transitions of $\hat{\mathcal{T}}$ that $(\hat{q}_{i},\hat{u}_i,\hat{q}_{i+1})\in\rightarrow_{\hat{\mathcal{T}}}$ for all $i\ge 0$.

To show (B), we prove $\xi \models \varphi$ from $\hat{\rho}\models \varphi$ by induction on the form of $\text{LTL}_{\setminus \bigcirc}$ formulas. In fact, we will prove a stronger statement: for each $k\ge 0$, $\hat{\rho}[k,\infty)$ implies that $\xi[t,\infty) \models \varphi_1$ for all $t\in [t_k,t_{k+1})$.

For $\varphi=\pi \in \Pi$, $\hat{\rho}[k,\infty) \models \pi$ iff $\pi \in \hat{L}(\hat{q}_k)$. Since
$$x(t)\in \mathcal{R}_{\hat{\mathbf{u}}_k,\Omega^{-1}(\hat{q}_k) \oplus \mathcal{B}_{\Gamma_1(\hat{q}_k)}}([0,\tau_k]),\quad\forall t\in[t_k,t_{k+1}),$$
we have $\pi \in \hat{L}(\hat{q}_k) \subseteq L(x(t))$, i.e., $\xi[t,\infty) \models \varphi=\pi$, for all $t\in [t_k,t_{k+1})$.

The cases for $\xi \models \varphi$ when $\varphi=\varphi_1 \wedge \varphi_2$ or $\varphi=\varphi_1 \vee \varphi_2$ are straightforward to prove. We focus on the case $\varphi=\varphi_1 \mathbf{U} \varphi_2$. Assume $\hat{\rho}[k,\infty) \models \varphi$, which means that there exists some $j\geq k$ such that $\hat{\rho}_0[j,\infty) \models \varphi_2$ and $\hat{\rho}_0[i,\infty) \models \varphi_1$ for all $i$ such that $k \leq i < j$. By the inductive assumption, we have $\xi[t,\infty) \models \varphi_2$ for all $t\in [t_j,t_{j+1})$ and $\xi[t,\infty) \models \varphi_1$ for all $t\in [t_i,t_{i+1})$ and all $i$ such that $k \leq i < j$. This indeed implies that $\xi[t,\infty) \models \varphi=\varphi_1 \mathbf{U} \varphi_2$, for all $t\in [t_k,t_{k+1})$. The proof for the case $\varphi=\varphi_1 \mathbf{R} \varphi_2$ is similar and therefore omitted.

(ii) Now consider system (\ref{eq:1b}) for robustness. The key difference now is that measured states are delayed versions of the longer true states affected by noise. Denote by $\hat{x}(t_i) \in \mathcal{B}_\varepsilon(x(t_i))$ the measured value of $x(t_i)$ and let $\hat{q}_i=\Omega(\hat{x}(t_i))$ for all $i\ge 0$. The corresponding continuous control strategy becomes
$$\sigma(\hat{x}(t_0),\cdots,\hat{x}(t_i))=\hat{\u}_i=\hat{\sigma}(\hat{q}_0,\cdots,\hat{q}_i).$$
Each control action $\hat{\u}_i$ is activated when the true state moves to $x(t_i)'=x(t_i+h(t_i)).$ The correspondence between the evolution of a true trajectory and the sequence of measure states are illustrated in the following diagram:
\begin{center}
      \includegraphics[scale=0.5]{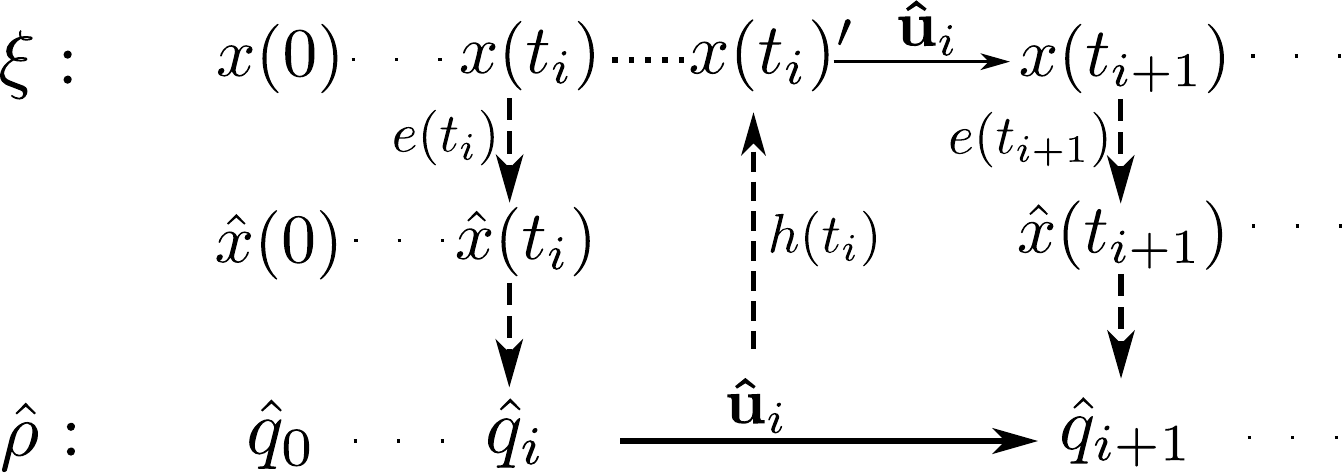}
\end{center}
We still need to show the two steps (A) and (B) as in part (i). We start with (A), i.e., show that the path $\hat{\rho}=\hat{q}_0\hat{q}_1\hat{q}_2\cdots$ is a valid path in $\hat{\T}$. Note that, according to (\ref{eq:thm2}), we have
$$
x(t_i)' \in \mathcal{R}_{u_{i-1},\Omega^{-1}(\hat{q}_i) \oplus \mathcal{B}_\varepsilon}([0,\Delta]) \subseteq \Omega^{-1}(\hat{q}_i) \oplus \mathcal{B}_{\Gamma_1(\hat{q}_i)}.
$$
Therefore
$$
x(t_{i+1}) \in \mathcal{R}_{\hat{\mathbf{u}}_i,\Omega^{-1}(\hat{q}_i) \oplus \mathcal{B}_{\Gamma_1(\hat{q}_i)}}(\tau_i) \subseteq \mathcal{B}_\delta(\hat{q}_{i}).
$$
Since $\hat{x}(t_{i+1}) \in \mathcal{B}_\varepsilon(x(t_{i+1}))$ and $\hat{q}_{i+1}=\Omega(\hat{x}(t_{i+1}))$, we have $x(t_{i+1})\in \Omega^{-1}(\hat{q}_{i+1})\oplus\mathcal{B}_\varepsilon$. Considering that the transitions for $\hat{\T}$ are constructed according to Definition \ref{def:abs} with $\Gamma_2 \geq \varepsilon$, the transition $(\hat{q}_i,\,\hat{\mathbf{u}}_i,\,\hat{q}_{i+1})$ is indeed included in $\rightarrow_{\hat{\mathcal{T}}}$.

Proving step (B) by induction is similar to that for part (i). We prove the claim: for each $k\ge 0$, $\hat{\rho}[k,\infty)$ implies that $\xi[t,\infty) \models \varphi_1$ for all $t\in [t_k,t_{k+1})$. Note that we have $t_k+h(t_k)\in [t_k,t_{k+1})$ and $t_{k+1}-t_k-h(t_k)=\tau_k$, the duration of $\hat{u}_k$. We only prove the case for atomic propositions and the rest is similar to that for part (i).

For $\varphi=\pi \in \Pi$, $\hat{\rho}[k,\infty) \models \pi$ iff $\pi \in \hat{L}(\hat{q}_k)$. Note first that, by (\ref{eq:thm2}),
$$x(t)\in \mathcal{R}_{\hat{\mathbf{u}}_{k-1},\Omega^{-1}(\hat{q}_k)\oplus\mathcal{B}_{\varepsilon}}([0,\Delta])\subseteq \Omega^{-1}(\hat{q}_i) \oplus \mathcal{B}_{\Gamma_1(\hat{q}_k)}\subseteq \mathcal{B}_\delta(\hat{q}_{k})
$$
for all $t\in[t_k,t_{k}+h(t_k)]$. This and (\ref{eq:thm1}) further imply that
$$
x(t)\in \mathcal{R}_{\hat{\mathbf{u}}_{k},\Omega^{-1}(\hat{q}_k) \oplus \mathcal{B}_{\Gamma_1(\hat{q}_i)}}([0,\tau_k])\subseteq \mathcal{B}_\delta(\hat{q}_{k})
$$
for all $t\in[t_k+h(t_k),t_{k+1})$. Consequently, we have $\pi \in \hat{L}(\hat{q}_k) \subseteq L(x(t))$, i.e., $\xi[t,\infty) \models \varphi=\pi$, for all $t\in [t_k,t_{k+1})$. \qed
\end{proof}

\section{Reachable Set Over-approximation Based on Linearization and Error Estimation}

A key step in constructing finite abstractions with robustness margins defined in the previous section is to compute the reachable sets for nonlinear systems. In practice, exact reachable sets of nonlinear systems are difficult to obtain and thus their approximations are usually computed.
For example, reachable set over-approximation is implicitly required by the abstraction procedures in \cite{PolaGT08,ZamaniPMT12,LiuO14}, where analytical bounds, usually obtained by Lyapunov-like functions, are used to roughly estimate the evolution of trajectories. A more precise computation of reachable sets has the potential to significantly reduce the spurious transitions in the abstraction.

In this section, we present a linearization-based method for the computation of reachable sets for nonlinear systems. For simplicity, we only consider constant control signals, which suffice for the computation of finite abstractions by discretization-based methods to be discussed in Section \ref{sec:discretization}.

\subsection{Reachable set computation for linear systems}

Consider a class of affine control systems of the form
\begin{equation}
\dot{x}(t)=Ax(t)+b+u(t) \label{eq:linaff}
\end{equation}
where $b \in \mathbb{R}^n$ is a constant vector, $x(t) \in X$ is the state, $u(t) \in U$ is the control signal, and $U \subseteq \mathbb{R}^m$ is a compact convex set.

Similar to Definition \ref{def:reach}, given an initial set of states $X_0 \subseteq X$, we denote by $\mathcal{R}_{X_0}^{L}(\tau)$ the set of states that are reachable at time $\tau \in \mathbb{R}_{\geq 0}$ under $U$, which is defined by
\begin{equation*}
\begin{split}
\mathcal{R}^L_{X_0}(\tau):=\{ x(\tau) \in X|\,&\dot{x}(t)=Ax(t)+b+u(t),\forall t \in [0,\tau],\\
&u(t) \in U, x(0) \in X_0\}.
\end{split}
\end{equation*}
The reachable tube over the interval $[0,\,\tau]$ is defined by
\begin{equation*}
\mathcal{R}^L_{X_0}([0,\tau]):=\bigcup_{t \in [0,\tau]}\mathcal{R}^L_{X_0}(t).
\end{equation*}

Since the control input $u(t)$ is chosen arbitrarily from the set $U$, both the reachable set and tube are difficult to be computed exactly. For linear control systems, their convex over-approximations are used instead (see, e.g., Lemmas 1 and 2 in \cite{GuernicG10}). The convex hull of two convex sets, which is defined by
$$
\text{CH}(\mathcal{X},\mathcal{Y})=\{\lambda x+(1-\lambda)y|\,x \in \mathcal{X}, y \in \mathcal{Y}, \lambda \in [0,1] \},
$$
is used to compute the reachable tube. For the linear affine control systems, we give the following proposition to over-approximate the reachable sets and tubes.

\begin{prop} \label{prop1}
For a linear affine control system (\ref{eq:linaff}), given a compact convex set $X_0 \subseteq X$ and a time $\tau \in \mathbb{R}_{\geq 0}$, let
\begin{equation} \label{eq:4a}
\begin{split}
Y(\tau)&=e^{A\tau}X_0 \oplus \{G(A,\tau)b \} \oplus \tau U \oplus \mathcal{B}_{\beta_\tau},\\
Y([0,\tau])&=\text{CH}(X_0, Y(\tau) \oplus \mathcal{B}_{\alpha_\tau+\gamma_\tau}),
\end{split}
\end{equation}
where
\begin{equation} \label{eq:4b}
\begin{split}
\alpha_\tau&=(e^{\tau \|A\|}-1-\tau \|A\|)\max_{x \in X_0}\|x\| \mathbf{1},\\
\beta_\tau&=(e^{\tau \|A\|}-1-\tau \|A\|)\|A\|^{-1}\max_{u \in U}\|u\| \mathbf{1},\\
\gamma_\tau&=(e^{\tau \|A\|}-1-\tau \|A\|)\|A\|^{-1}\|b\|\mathbf{1},
\end{split}
\end{equation}
with $\norm{\cdot}$ as the infinity norm, $\mathbf{1} \in \mathbb{R}^n$ representing the vector of ones, i.e., each element of it equals to 1, and $G(A,\tau):=\int_0^{\tau}e^{A(\tau-t)}dt$. Then
\begin{equation*}
\begin{split}
\mathcal{R}^L_{X_0}(\tau) &\subseteq Y(\tau),\\
\mathcal{R}^L_{X_0}([0,\tau]) &\subseteq Y([0,\tau]).
\end{split}
\end{equation*}
\end{prop}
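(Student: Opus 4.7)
The plan is to work directly from the variation-of-constants formula
\begin{equation*}
x(t)=e^{At}x(0)+G(A,t)b+\int_0^t e^{A(t-s)}u(s)\,ds,
\end{equation*}
and in each inclusion peel off a ``nominal'' part that already fits into the relevant Minkowski summands of $Y(\tau)$ and a Taylor-remainder that is absorbed into $\mathcal{B}_{\alpha_\tau}$, $\mathcal{B}_{\beta_\tau}$, or $\mathcal{B}_{\gamma_\tau}$. For the reach set, I would split the input integral as $\int_0^\tau u(s)\,ds+\int_0^\tau(e^{A(\tau-s)}-I)u(s)\,ds$. Convexity of $U$ places the first piece in $\tau U$, and the elementary bound $\norm{e^{A(\tau-s)}-I}\le e^{\norm{A}(\tau-s)}-1$ (direct from the matrix exponential series) together with $\norm{u(s)}\le\max_{u\in U}\norm{u}$ yields, after integration in $s$, a norm estimate of $\norm{A}^{-1}(e^{\norm{A}\tau}-1-\norm{A}\tau)\max_{u\in U}\norm{u}$ on the second summand, which is exactly $\beta_\tau$ entrywise in the infinity norm. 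Combined with $e^{A\tau}x(0)\in e^{A\tau}X_0$ and the explicit term $G(A,\tau)b$, this establishes $\mathcal{R}^L_{X_0}(\tau)\subseteq Y(\tau)$.

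For the reach-tube inclusion, fix $t\in(0,\tau]$ and any admissible $u(\cdot)$, and set $u^\ast:=\tfrac{1}{t}\int_0^t u(s)\,ds$, which lies in $U$ because $U$ is convex. I would write
\begin{equation*}
x(t)=\bigl(1-\tfrac{t}{\tau}\bigr)x(0)+\tfrac{t}{\tau}\bigl[e^{A\tau}x(0)+G(A,\tau)b+\tau u^\ast+e\bigr]
\end{equation*}
with $e$ defined by this identity, and show that $\abs{e}\le\alpha_\tau+\beta_\tau+\gamma_\tau$. The bracketed point then lies in $Y(\tau)\oplus\mathcal{B}_{\alpha_\tau+\gamma_\tau}$: the $\beta_\tau$-component of the correction is absorbed by the $\mathcal{B}_{\beta_\tau}$ factor already present in $Y(\tau)$. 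Consequently $x(t)$ is a convex combination of $x(0)\in X_0$ and an element of $Y(\tau)\oplus\mathcal{B}_{\alpha_\tau+\gamma_\tau}$, and therefore lies in $Y([0,\tau])$; the boundary case $t=0$ is trivial because $x(0)\in X_0$.

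The remaining work is to bound $E(t):=\tfrac{t}{\tau}e$, which splits into three summands: $[(e^{At}-I)-\tfrac{t}{\tau}(e^{A\tau}-I)]x(0)$, $[G(A,t)-\tfrac{t}{\tau}G(A,\tau)]b$, and $\int_0^t(e^{A(t-s)}-I)u(s)\,ds$. Expanding the first two in their Taylor series and using $\abs{t^{k-1}-\tau^{k-1}}\le\tau^{k-1}$ (for $k\ge 2$ and $0\le t\le\tau$) collapses them, after maximizing $\norm{x(0)}$ over $X_0$, to bounds of $\tfrac{t}{\tau}\alpha_\tau$ and $\tfrac{t}{\tau}\gamma_\tau$ respectively. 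The input term is dominated, as in the reach-set step, by $\norm{A}^{-1}(e^{\norm{A}t}-1-\norm{A}t)\max_{u\in U}\norm{u}$, and I expect this to be the delicate step: one must show that this quantity is in turn at most $\tfrac{t}{\tau}\beta_\tau$, which reduces to the scalar inequality $(e^{\norm{A}t}-1-\norm{A}t)/t\le(e^{\norm{A}\tau}-1-\norm{A}\tau)/\tau$. This last inequality follows immediately from the fact that $s\mapsto(e^{\norm{A}s}-1-\norm{A}s)/s=\sum_{j\ge 1}\norm{A}^{j+1}s^j/(j+1)!$ has non-negative coefficients and is thus non-decreasing on $s>0$. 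Summing the three bounds gives $\abs{E(t)}\le\tfrac{t}{\tau}(\alpha_\tau+\beta_\tau+\gamma_\tau)$, hence $\abs{e}\le\alpha_\tau+\beta_\tau+\gamma_\tau$, and closes the argument.
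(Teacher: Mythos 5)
Your proposal is correct and follows essentially the same route as the paper's proof: the same variation-of-constants decomposition with $u^\ast=\tfrac{1}{t}\int_0^t u(s)\,ds\in U$ by convexity, the same representation of $x(t)$ as a convex combination of $x(0)$ and a point of $Y(\tau)\oplus\mathcal{B}_{\alpha_\tau+\gamma_\tau}$, and the same remainder bound $\tfrac{t}{\tau}(\alpha_\tau+\beta_\tau+\gamma_\tau)$, with the reach-set case recovered at $t=\tau$ where only the $\beta_\tau$ term survives. The only difference is that you spell out the Taylor-series estimates (including the monotonicity of $s\mapsto(e^{\norm{A}s}-1-\norm{A}s)/s$) that the paper states without detail.
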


\begin{proof}
Denote by $x(t), t \in [0,\tau]$, a trajectory of the system from a initial state $x_0 \in X_0$ under an input $u(t) \in U$, and
\begin{equation*}
\begin{split}
x(t)=&e^{tA}x_0+\int_0^te^{A(t-s)}b\,ds\\
&+\int_0^{t}u(s)ds+\int_0^{t}(e^{A(t-s)}-I)u(s)ds\\
=&e^{tA}x_0+G(A,t)b+tu^*(t)\\
&+\int_0^{t}(e^{A(t-s)}-I)u(s)ds,
\end{split}
\end{equation*}
where $u^*(t)=\frac{1}{t}\int_0^tu(s)ds \in U$ for that $U$ is convex. We estimate $x(t)$ by $\hat{x}(t)$, which is given by
\begin{equation*}
\hat{x}(t)=x_0+\frac{t}{\tau}(e^{\tau A}-I)x_0+\frac{t}{\tau}G(A,\tau)b+tu^*(t).
\end{equation*}
Then
\begin{equation} \label{eq:5}
\begin{split}
\|x(t)-\hat{x}(t)\| \leq &\|e^{tA}x_0-x_0-\frac{t}{\tau}(e^{\tau A}-I)x_0\|\\
& + \|G(A,t)b-\frac{t}{\tau}G(A,\tau)b\| \\
&+ \|\int_0^{t}(e^{A(t-s)}-I)u(s)ds\|\\
\leq &\frac{t}{\tau}(\alpha_\tau+\gamma_\tau+\beta_\tau).
\end{split}
\end{equation}
This means there exists a vector $\tilde{x}(t)$ in $\mathcal{B}_{\alpha_\tau+\gamma_\tau+\beta_\tau}$ such that
\begin{equation*}
\begin{split}
x(t)&=\hat{x}(t)+\frac{t}{\tau}\tilde{x}(t)\\
&=(1-\frac{t}{\tau})x_0+\frac{t}{\tau}(e^{\tau A}+G(A,\tau)b+tu^*(t)+\tilde{x}(t)).
\end{split}
\end{equation*}
Therefore
\begin{align*}
\mathcal{R}^L_{X_0}([0,\tau]) & \subseteq \text{CH}(X_0, e^{A\tau}X_0 \oplus \{G(A,\tau)b \} \oplus \tau U \oplus \mathcal{B}_{\alpha_\tau+\gamma_\tau+\beta_\tau})\\
&=Y([0,\tau]).
\end{align*}

The state estimation error at time $\tau$ reduces to $\|x(\tau)-\hat{x}(\tau)\|\leq \beta_\tau$ by setting $t=\tau$ in (\ref{eq:5}). Thus $\mathcal{R}^L_{X_0}(\tau) \subseteq e^{A\tau}X_0 \oplus \{G(A,\tau)b \} \oplus \tau U \oplus \mathcal{B}_{\beta_\tau}=Y(\tau)$. \qed
\end{proof}

\begin{remark}
Proposition \ref{prop1} differs from \cite{GuernicG10} in considering affine systems. Defining $v(t):=b+u(t), v(t) \in V=\{b\} \oplus U$, the method in \cite{GuernicG10} can also be applied. Yet when $u(t)$ is small compared to $b$, the size of $Y(\tau)$ computed by proposition \ref{prop1} is smaller because of a smaller bloating parameter $\beta_\tau$.

\end{remark}

\subsection{Reachable set computation for nonlinear systems}

Reachable set over-approximation for nonlinear systems obtained by a global analytical function can be conservative. To obtain a relatively tighter over-approximation of the one-step reachable set of nonlinear systems, we can write the nonlinear system dynamics as the sum of its linearization in a local area and an approximation error term.

More specifically, for a nonlinear system (\ref{eq:1}) under a constant control input $u\in \mathcal{U}$, the dynamics around a center point $x^*\in X$ can be approximated by its first-order Taylor expansion with a Lagrangian remainder:
\begin{equation} \label{eq:lin}
\dot{x}(t) = A_{x^*}(x(t)-x^*)+f(x^*,u)+d_{x^*}(t),
\end{equation}
where $A_{x^*}=\partial f/ \partial x|_{x^*}$, and $d_{x^*}(t)=(d_1(t),\cdots,d_n(t)) \in \mathbb{R}^n$ is the approximation error with
$$
d_i(t)=\frac{1}{2}(x(t)-x^*)^TH_i(z_i(t))(x(t)-x^*),
$$
$$
H_i(z_i(t))=\frac{\partial^2 f_i}{\partial x^2}\bigg|_{z_i(t)},
$$
and $z_i(t) \in \mathcal{B}_{|x(t)-x^*|}(x^*)$.

If the system trajectory does not exceed a predefined linearization area $\mathcal{B}_r(x^*)$, where $r \in \mathbb{R}^n_{>0}$, then $d_{x^*}(t)$ belongs to a convex set $\mathcal{D}_{x^*}(r)$ given by
\begin{equation} \label{eq:d}
\begin{split}
\mathcal{D}_{x^*}(r)=\{d=(d_1,\,\dots,\,d_n)|\,&d_i=\frac{1}{2}x^TH_i(z_i)x,\\
& x \in \mathcal{B}_r,\, z_i \in \mathcal{B}_r(x^*)\}.
\end{split}
\end{equation}

Defining $\tilde{x}(t):=x(t)-x^*$, (\ref{eq:lin}) is in the form of (\ref{eq:linaff}). Thus, the reachable set and tube of the nonlinear control system (\ref{eq:1}) can be computed using Proposition \ref{prop1} locally.

\subsection{Reachable set computation using zonotopes}
Since set operations, such as linear transformation, addition and multiplication, are used extensively in the computation of reachable sets, a proper set representation can help expedite the computational process. To this end, zonotope representation is attractive for its efficiency in the aforementioned set operations (see, e.g., \cite{Girard05,GirardGM06,Althoff14}).

\begin{definition} \label{def:zonotope}
A \textit{zonotope} is a set represented as
\begin{equation*}
\mathcal{Z}:=\left\{ x \in \mathbb{R}^n|\,x=c+\sum_{i=1}^l\lambda_ig^{(i)},\,\lambda_i \in[-1,\,1]\right\},
\end{equation*}
where $c,\,g^{(i)}(i=1,\,2,\,\dots,\,l) \in \mathbb{R}^n$ are called the central vector and generators, respectively; $l$ is the number of generators. It is often denoted as $\mathcal{Z}=(c,\,g^{(1)},\,\dots,\,g^{(l)})$.
\end{definition}

The addition of two zonotopes $\mathcal{Z}_1=(c_1,\,g_1^{(1)},\,\dots,\,g_1^{(l_1)})$ and $\mathcal{Z}_2=(c_2,\,g_2^{(1)},\,\dots,\,g_2^{(l_2)})$ and the multiplication of a zonotope with a matrix $M \in \mathbb{R}^{n\times n}$ can be easily derived as
\begin{equation*}
\begin{split}
\mathcal{Z}_1 \oplus \mathcal{Z}_2 &= (c_1+c_2,\,g_1^{(1)},\,\dots,\,g_1^{(l_1)},\,g_2^{(1)},\,\dots,\,g_2^{(l_2)}),\\
M\mathcal{Z}_1 &=(Mc_1,\,Mg_1^{(1)},\,\dots,\,Mg_1^{(l_1)}).
\end{split}
\end{equation*}

For a zonotope with $l$ generators in $\mathbb{R}^n$, $l/n$ is called the \textit{order} of the zonotope.

\begin{example}
The set $\mathcal{B}_r$ with $r=(r_1,\cdots,r_n), r_i \in \mathbb{R}_{>0}$ can be written in the form of zonotope as
\begin{equation} \label{eq:br}
\mathcal{Z}_{\mathcal{B}_r}=(0,\,g_r^{(1)},g_r^{(2)},\cdots,,g_r^{(n)}),
\end{equation}
where $g_r^{(i)} \in \mathbb{R}^n$ is a vector with all the elements being zero except that the $i$th element is $r_i$, $i=1,2,\cdots,n$.
\end{example}

The approximation error $\mathcal{D}_{x^*}(r)$ as in (\ref{eq:d}) can be over-approximated using the quadratic map \cite{Althoff14}.
Instead of computing $H_i(z_i)$ for every $z_i \in \mathcal{B}_r(x^*)$, we enclose it by an interval matrix $\overline{H}_i(x^*)$. Denote by $\overline{h}_{ij}$ the element of the $i$th row and $j$th column of $\overline{H}_i(x^*)$, then $\overline{h}_{ij}=[h_{ij}^l,h_{ij}^u]$, where $h_{ij}^l$ and $h_{ij}^u$ is the minimum and maximum values of $\overline{h}_{ij}$ in the linearization area respectively. Using $\mathcal{Z}_{\mathcal{B}_r}$ defined in (\ref{eq:br}), we can compute an over-approximation of $\mathcal{D}_{x^*}(r)$ by
\begin{equation} \label{eq:dzono}
\mathcal{D}_{x^*}(r)\subseteq \overline{\mathcal{D}}_{x^*}(r):=\text{quad}(\overline{H}_i(x^*),\mathcal{Z}_{\mathcal{B}_r}),
\end{equation}
where $\text{quad}(\cdot,\cdot)$ is the quadratic map defined in \cite{Althoff14}.

The convex hull operation of two zonotopes can be over-approximated by (see \cite{Girard05,Althoff10} for more details)
\begin{equation*}
\begin{split}
\overline{\text{CH}}(\mathcal{Z}_1,\mathcal{Z}_2)=\frac{1}{2} ( &c_1+c_2, g_1^{(1)}+g_2^{(1)}, \cdots, g_1^{(l)}+g_2^{(l)},\\
&c_1-c_2, g_1^{(1)}-g_2^{(1)},\cdots, g_1^{(l)}-g_2^{(l)} ).
\end{split}
\end{equation*}

To sum up, we give the following proposition, which aims to over-approximate the local reachable sets of nonlinear systems using zonotopes.

\begin{prop}
Given a nonlinear control system $\T$, the function $\Gamma_1:X \to \mathbb{R}^n_{\geq 0}$, an abstraction map $\Omega: X \to \hat{\mathcal{Q}}$ and a finite set of constant control actions $\hat{\mathcal{A}}$, for any $\hat{q} \in \hat{\mathcal{Q}}$ and $\hat{\mathbf{u}} \in \hat{\mathcal{A}}$ with $\hat{\mathbf{u}}(t)=\hat{u} \in U, \forall t \in [0,\tau]$, denote
\begin{equation} \label{eq:x}
\begin{split}
X_{\hat{q}}=\Omega^{-1}(\hat{q}) \oplus \mathcal{B}_{\Gamma_1(\hat{q})},\; \tilde{X}_{\hat{q}}=\{-\hat{q}\} \oplus X_{\hat{q}}.
\end{split}
\end{equation}
The reachable set and tube $\mathcal{R}_{\hat{u},X_{\hat{q}}}(\tau)$ and $\mathcal{R}_{\hat{u},X_{\hat{q}}}([0,\tau])$ can be over-approximated by the sets $\overline{\mathcal{R}}_{\hat{u},X_{\hat{q}}}(\tau)$ and $\overline{\mathcal{R}}_{\hat{u},X_{\hat{q}}}([0,\tau])$, respectively, which are computed by
\begin{equation} \label{eq:reachset}
\overline{\mathcal{R}}_{\hat{u},X_{\hat{q}}}(\tau) = \set{\hat{q}} \oplus \tilde{Y}(\tau),
\end{equation}
and
\begin{equation} \label{eq:reachtube}
\overline{\mathcal{R}}_{\hat{u},X_{\hat{q}}}([0,\tau]) = \set{\hat{q}} \oplus \overline{\text{CH}}(\tilde{X}_{\hat{q}}, \tilde{Y}(\tau) \oplus \mathcal{B}_{\alpha_\tau+\gamma_\tau}),
\end{equation}
where
\begin{equation*}
\tilde{Y}(\tau)=e^{A_{\hat{q}}\tau}\tilde{X}_{\hat{q}} \oplus G(A_{\hat{q}},\tau)f(\hat{q},u) \oplus \tau \mathcal{D}_{\hat{q}}(r) \oplus \mathcal{B}_{\beta_\tau},
\end{equation*}
and $\alpha_\tau$, $\beta_\tau$, $\gamma_\tau$, $G(A_{\hat{q}},\tau)$ are defined as in Proposition \ref{prop1}.
\end{prop}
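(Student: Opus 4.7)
The plan is to reduce the statement to Proposition~\ref{prop1} by translating coordinates to be centred at $\hat{q}$ and identifying the linearization remainder as the ``control input'' of an affine system. I would first set $\tilde{x}(t) := x(t)-\hat{q}$, so that the initial set becomes $\tilde{X}_{\hat{q}}$, and rewrite the dynamics using the first-order Taylor expansion (\ref{eq:lin}) with $x^* = \hat{q}$ and constant input $\hat{u}$:
\begin{equation*}
\dot{\tilde{x}}(t) = A_{\hat{q}}\tilde{x}(t) + f(\hat{q},\hat{u}) + d_{\hat{q}}(t).
\end{equation*}
This matches the form (\ref{eq:linaff}) with drift $b = f(\hat{q},\hat{u})$ and time-varying input $d_{\hat{q}}(t)$. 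So long as the trajectory stays inside the linearization box $\mathcal{B}_r(\hat{q})$ on $[0,\tau]$, the Lagrangian remainder lies in the compact convex set $\mathcal{D}_{\hat{q}}(r)$ of (\ref{eq:d}), which then plays the role of the input set $U$ in Proposition~\ref{prop1}.

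Next, I would apply Proposition~\ref{prop1} directly to this shifted system with the substitutions $X_0 \leftarrow \tilde{X}_{\hat{q}}$, $A \leftarrow A_{\hat{q}}$, $b \leftarrow f(\hat{q},\hat{u})$ and $U \leftarrow \mathcal{D}_{\hat{q}}(r)$. The bounds (\ref{eq:4a})--(\ref{eq:4b}) then become exactly the set $\tilde{Y}(\tau)$ defined in the statement and the convex hull $\mathrm{CH}(\tilde{X}_{\hat{q}}, \tilde{Y}(\tau) \oplus \mathcal{B}_{\alpha_\tau+\gamma_\tau})$, with $\alpha_\tau,\beta_\tau,\gamma_\tau$ evaluated on the shifted data as indicated in the proposition. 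Pulling back via the inverse translation $x(t) = \tilde{x}(t) + \hat{q}$ introduces the $\{\hat{q}\}\oplus(\cdot)$ factor in both expressions, giving (\ref{eq:reachset}) and, before the final over-approximation of the convex hull, a preliminary version of (\ref{eq:reachtube}).

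The final step is to pass from the exact convex hull to its zonotopic surrogate $\overline{\mathrm{CH}}$ recalled immediately before the statement; since $\mathrm{CH}(\mathcal{Z}_1,\mathcal{Z}_2) \subseteq \overline{\mathrm{CH}}(\mathcal{Z}_1,\mathcal{Z}_2)$, the inclusion is preserved, yielding (\ref{eq:reachtube}). The main obstacle I foresee is the implicit well-posedness requirement that $x(t)\in\mathcal{B}_r(\hat{q})$ for every $t\in[0,\tau]$: this is what licenses the enclosure $d_{\hat{q}}(t)\in\mathcal{D}_{\hat{q}}(r)$ in the first place, and without it the reduction to Proposition~\ref{prop1} cannot be invoked. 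In a careful write-up this must either be posed as a hypothesis on the chosen linearization radius $r$, or discharged by an a posteriori check that $\overline{\mathcal{R}}_{\hat{u},X_{\hat{q}}}([0,\tau]) \subseteq \mathcal{B}_r(\hat{q})$ after the computation, with $r$ enlarged (or $\tau$ shrunk) and the procedure repeated otherwise. The remaining content is essentially affine-translation bookkeeping on top of Proposition~\ref{prop1}.
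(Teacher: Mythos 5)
Your reduction to Proposition~1 --- translating coordinates by $\hat{q}$, taking $A \leftarrow A_{\hat{q}}$, $b \leftarrow f(\hat{q},\hat{u})$, and $U \leftarrow \mathcal{D}_{\hat{q}}(r)$, then relaxing $\mathrm{CH}$ to $\overline{\mathrm{CH}}$ --- is exactly the argument the paper intends; the proposition is stated there without an explicit proof, as a summary of the two preceding subsections, and your bookkeeping matches $\tilde{Y}(\tau)$ and \eqref{eq:reachtube} term for term. The well-posedness caveat you flag (that the trajectory must remain in $\mathcal{B}_r(\hat{q})$ for the remainder enclosure $d_{\hat{q}}(t)\in\mathcal{D}_{\hat{q}}(r)$ to be licensed) is genuine and is discharged in the paper exactly as you propose, via the a posteriori containment test $\overline{\mathcal{R}}_{\hat{u},X_0}([0,p\tau_s])\subseteq\mathcal{B}_r(\hat{q})$ with the duration shrunk otherwise, in Algorithm~1.
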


\section{Computation of Abstraction by Discretization and Zonotope Representation} \label{sec:discretization}

In this section, we discuss how to construct finite abstractions with robustness margins by grid-based discretization.

\subsection{Grid-based discretization}

Consider uniform parameters $\eta \in \mathbb{R}^n_{>0}$, $\mu \in \mathbb{R}^m_{>0}$ and a fixed sampling time $\tau_s \in \mathbb{R}_{>0}$. Let $\hat{\mathcal{Q}}=[X]_{\eta}$ be the set of states in $\hat{\T}$. In this case, $\Omega^{-1}(\hat{q}) \oplus \mathcal{B}_{\Gamma_{i}(\hat{q})}=\mathcal{B}_{\eta/2+\Gamma_{i}(\hat{q})}(\hat{q})$ ($i=1,2$). Using zonotopes with order 1, $X_{\hat{q}}$, $\tilde{X}_{\hat{q}}$ in (\ref{eq:x}) become
\begin{equation*}
\begin{split}
X_{\hat{q}}&=(\hat{q},\,g_\eta^{(1)},g_\eta^{(2)},\cdots,,g_\eta^{(n)}),\\
\tilde{X}_{\hat{q}}&=(0,\,g_\eta^{(1)},g_\eta^{(2)},\cdots,,g_\eta^{(n)}),
\end{split}
\end{equation*}
where $g_\eta^{(i)} \in \mathbb{R}^n$ is a vector with all the elements are zero except the $i$th element being $\eta/2+\Gamma_1(\hat{q})$, $i=1,2,\cdots,n$.

The set of control actions $\hat{\mathcal{A}}$ only contains the control signals that take values in $[U]_{\mu}$ and the time duration are integral multiples of $\tau_s$. Since the computation of reachable sets and tubes are only valid within the linearization area $\mathcal{B}_r(\hat{q})$, the time duration and the value of the control signals should be determined to make sure that the transitions only take place inside it. Furthermore, in order to satisfy Theorem 1, this area should belong to $\mathcal{B}_\delta(\hat{q})$; in other words, $r \leq \delta$.

\subsection{Algorithm for computing transitions}

The algorithm for computing transitions is designed to collect all the valid transitions under a grid-based discretization according to Theorem 1. The main steps are devoted to solving the key problem of determining the valid control signal duration $\tau=k\tau_s,k \in \mathbb{N}$ (if it exists) for each element in $[U]_\mu$ and state in $\mathcal{\hat{Q}}$.

Similar to a lazy control strategy, which means that the control action is kept to be the same for as long as possible, we choose $\tau=\tau_{\text{max}}$, where $\tau_{\text{max}}$ is the maximum time of a control signal under which the system remains within a predefined linearization area. A practical consideration for this is that a short time duration can potentially introduce spurious self-transitions that do not exist in the original continuous system.

Out of simplicity in implementation, we use $\hat{\tau}_{\text{max}}=p^*\tau_s, p^* \in \mathbb{N}$ as an under-approximation of $\tau_{\mathrm{max}}$, and approach it iteratively using a lower bound $a$ and an upper bound $b$ ($a,\,b\in\mathbb{N}$ and $a \leq b$).
The initial guess equals to the upper bound $b$. If the reachable set is fully inside the linearization area, which means $p^*\ge b$, the bounds shift to $[b,\,b+(b-a)]$; if the reachable set has already move outside the region, the bounds shrink to $[a,\,\lfloor \frac{a+b}{2} \rfloor]$. Considering the situation that reachable sets shrinks around the equilibriums, i.e., $\tau_{\mathrm{max}}=\infty$, we set an upper limit $N \in \mathbb{N}$ for $p$.

Algorithm 1 sketches the computation of transitions in a $(\Gamma_1,\Gamma_2,\delta)$-abstraction. For system (\ref{eq:1}), we can use constant margins satisfying $\Gamma_{1,2} \geq 0$. For system (\ref{eq:1b}), $\Gamma_2 \geq \varepsilon$ can be set as a constant, whereas the margin $\Gamma_1$ is not predefined, but chosen adaptively according to (\ref{eq:thm2}).

\begin{algorithm}
\caption{Computation of the transitions $\{ \rightarrow_{\hat{\mathcal{T}}}\}$ in a $(\Gamma_1,\Gamma_2,\delta)$-abstraction $\hat{\mathcal T}$}
\begin{algorithmic}[1]
\Require $r,\tau_s,\eta,\hat{\mathcal{A}},\hat{\mathcal{Q}}$ and $\Delta$, $\varepsilon$ ($\Delta=0, \varepsilon=0$ for (\ref{eq:1}))
\State $\Gamma_1=\varepsilon$, $\Gamma_2 \equiv \varepsilon$, $\{ \rightarrow_{\hat{\mathcal{T}}}\} \leftarrow \varnothing$
\ForAll{$\hat{q} \in \hat{\mathcal{Q}}$}
\ForAll{$\hat{u} \in \hat{\mathcal{A}}$}
\State Compute $f_{\hat{q}}$, $A_{\hat{q}}$, and $\overline{\mathcal{D}}_{\hat{q}}(r)$ by (\ref{eq:lin}) and (\ref{eq:dzono})

\State $X_0'=\varnothing$
\ForAll{$\hat{v} \in \hat{\mathcal{A}}$}
\State $X_0'=X_0' \cup \overline{\mathcal{R}}_{\hat{v},\,\mathcal{B}_{\eta/2+\varepsilon}(\hat{q})}([0,\Delta])$
\EndFor
\State Choose $\Gamma_1$ s.t. $X_0' \oplus (-\mathcal{B}_{\eta/2}(\hat{q}))\subseteq \mathcal{B}_{\Gamma_1}$
\State $X_0 = \mathcal{B}_{\eta/2+\Gamma_1}(\hat{q})$, $X_R=\varnothing$

\State $p=p_0$, $a=0$, $b=p$
\While{$(a \neq b)\, \wedge\, (p>0)\, \wedge\, (p<N)$}
\State Compute $\overline{\mathcal{R}}_{\hat{u},X_0}(p\tau_s)$, $\overline{\mathcal{R}}_{\hat{u},X_0}([0,p\tau_s])$
\If{$\overline{\mathcal{R}}_{\hat{u},X_0}([0,\,p\tau_s]) \subseteq \mathcal{B}_{r}(\hat{q})$}
\State $X_R=\overline{\mathcal{R}}_{\hat{u},X_0}(p\tau_s)$
\State $p=2b-a$, $a=b$, $b=p$
\Else
\State $p=\lfloor \frac{a+b}{2} \rfloor$, $b=p$
\EndIf
\EndWhile
\State $\tau=a\tau_s$
\If{$\mathcal{B}_{\eta/2+\Gamma_2}(\hat{q}') \cap X_R \neq \varnothing$}
\State $\{ \rightarrow_{\hat{\mathcal{T}}}\} \leftarrow (\hat{q},\,\hat{u},\,\tau,\,\hat{q}')$
\EndIf
\EndFor
\EndFor
\State \Return $\{ \rightarrow_{\hat{\mathcal{T}}}\}$
\end{algorithmic}
\end{algorithm}


\section{Comparison with Lyapunov-based Approximation}

We analyze the performance of the controllers synthesized using finite abstractions with robustness margins by two examples: the pendulum system \cite{PolaGT08}) and the automatic cruise control \cite{LiuO14}.

\subsection{Pendulum}
The pendulum model considered here is
\begin{equation*}
\begin{split}
	\begin{bmatrix}
	\dot{x}_1\\
	\dot{x}_2
	\end{bmatrix}
	&=
	\begin{bmatrix}
	x_2\\
	-\frac{g}{l}\sin x_1-\frac{k}{m}x_2+u
	\end{bmatrix},\\
	g&=9.8,\,l=5,\,m=0.5,\,k=3,\\
\end{split}
\end{equation*}
where $u \in U=[-1, 0], x \in X=[-0.5, 0]\times[-0.2, 0.2]$; $u$ is the normalized control torque; $x_1, x_2$ represent the angle (rad) and the angular rate (rad/s), respectively. The angle is measured from the perpendicular line to the current ball position. The positive direction is counter clockwise. The constants $g$, $l$, $m$, $k$ denote the gravity acceleration, rod length, mass, and friction coefficient, respectively.

The specification is given by an $\text{LTL}_{\setminus \bigcirc}$ formula $\varphi=\square \varphi_s \wedge \lozenge \square \varphi_t$ with $\varphi_s=X$ and $\varphi_t=[-0.3,\,-0.2] \times [-0.05,\,0.05]$. In our simulation, the abstraction parameters are $\tau_s=0.01s,\,r=[0.04;\,0.04],\,\eta=[0.02;\,0.02],\,\mu=0.01$. As shown in Fig.~\ref{fig:pdl} (left), the controlled system trajectory satisfies the given specification.

On the other hand, we fail to generate a controller using the abstraction based on Lyapunov-like method, as a result of its greater conservatism. We compare the number of transitions included by different reachable set computation methods. With the same partition, applying the control torque $u=-0.81$ at the state $x_1=-0.3$, $x_2=0.1$, the number of post states computed by our method is 4 while it is 49 using the Lyapunov-based method. As shown in Fig.~\ref{fig:pdl} (right), the one-step reachable set computed using our method is smaller than that using the Lyapunov-based method.

\begin{figure}[thpb]
      \centering
      \includegraphics[scale=0.4]{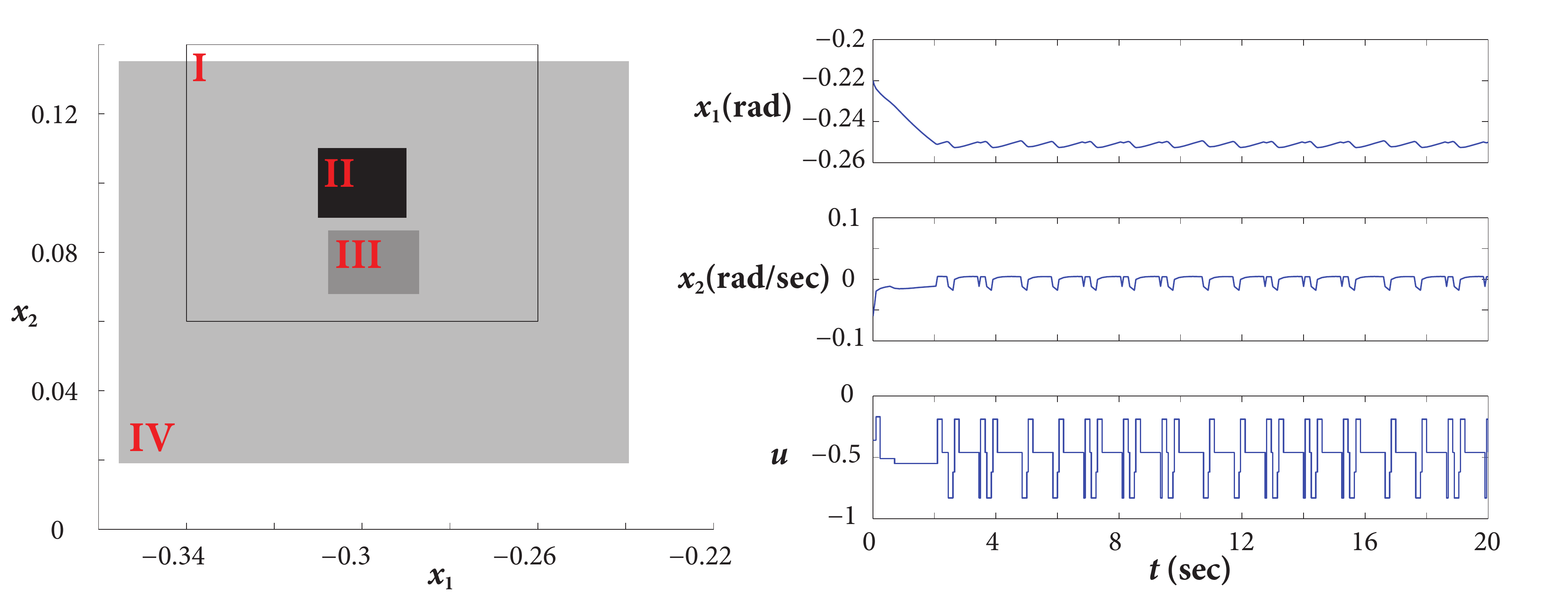}
      \caption{Left: The trajectories of the controlled pendulum system states and the corresponding control signal. Right: Comparison of one-step reachable sets generated by two methods: region I indicates the linearization region $\mathcal{B}_r(\hat{q})$; region II is the initial set of states; region III is an over-approximation of the reachable set obtained by the proposed linearization-based method; region IV is an over-approximation of the reachable set obtained by an analytical bound using Lyapunov-based methods.}
      \label{fig:pdl}
\end{figure}

\subsection{Automatic cruise control}

Consider the longitudinal dynamics of automatic cruise control
\begin{equation*}
\dot{v}=u-c_0-c_1v^2,
\end{equation*}
where $v \in [20,\,30]$, $u \in [-1.5,\,1]$, $c_0=0.1$, and $c_1=0.00016$.

To design a controller satisfying the specification $\varphi=\square(v \leq 30) \wedge \lozenge \square (v \in [22,\,24])$, we set $\tau_s=0.3 \text{s},\,r=0.6,\,\eta=0.1,\,\mu=0.2$. In the simulations, the system is subjected to a maximum delay $d=0.01$s and a measurement error bound $\varepsilon=0.1$m/s. We construct three different abstractions: i) one without robustness margins; ii) one with uniform robustness margins (as defined in \cite{LiuO14}); iii)  one with varying robustness margins (as defined in this paper). Fig.~\ref{fig:cruise} presents the simulation results of the cruise control system, under controllers synthesized using the first and the third abstractions, respectively. As observed from Fig.~\ref{fig:cruise} (left), the speed jumps out of the target range as the time lapses because the first abstraction cannot counteract delays or measurement errors, while the result from the third abstraction shown on the right of Fig.~\ref{fig:cruise} is satisfactory. To compare the second and the third abstractions, we look at their transitions around the state $v=21.4$m/s under the control input $u=0.15$. The second abstraction has $30$ transitions, whereas the third one has only $20$. In fact, due to its greater conservatism, the second abstraction is not able to generate a controller during control synthesis.

\begin{figure}[thpb]
      \centering
      \includegraphics[scale=0.4]{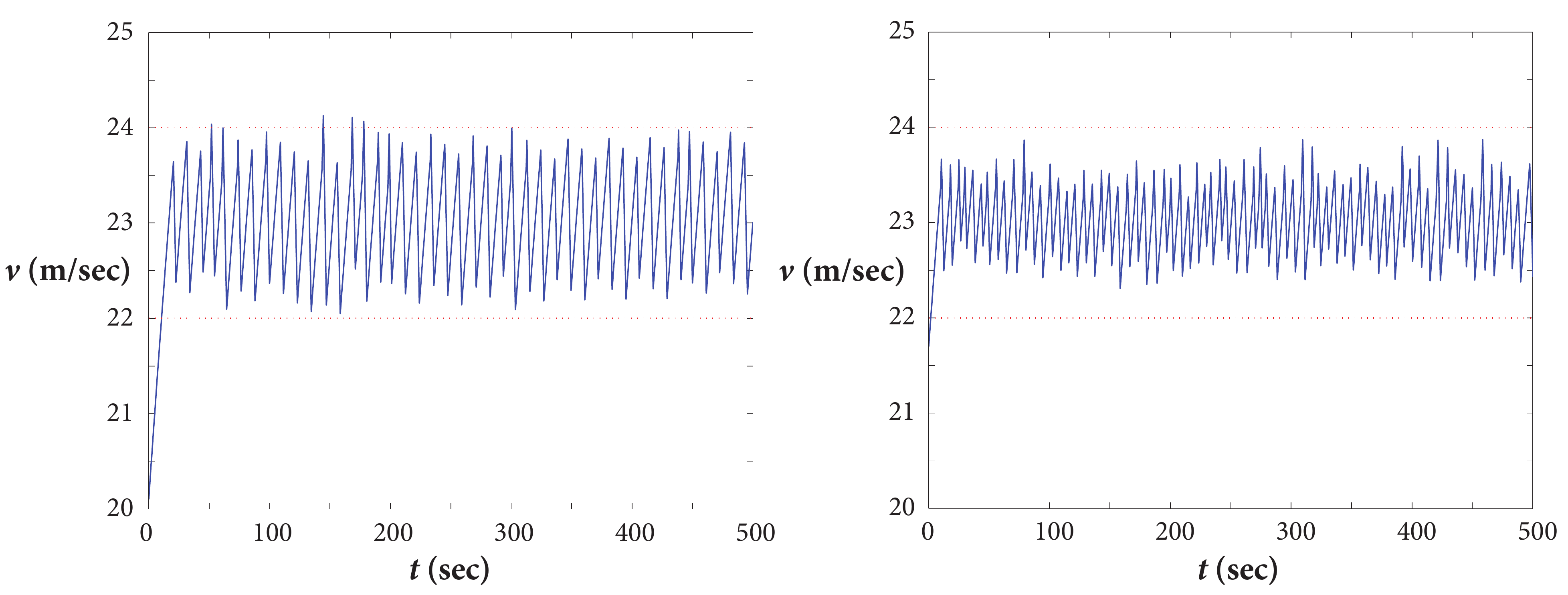}
      \caption{Controlled state evolution synthesized from an abstraction with (right) and without (left) local robustness margins.}
      \label{fig:cruise}
\end{figure}

\section{Conclusion}

In this paper, we considered the problem of constructing finite abstractions for nonlinear systems that are suitable for synthesizing robust controllers. A notion of finite abstractions with robustness margins that vary with respect to the local dynamics was formally defined. One main contribution of our work was to apply local reachable sets computation techniques in computing finite transitions, which led to reduced degree of nondeterminism in the abstractions. The local reachable sets are computed by linearization and approximation error estimation. As illustrated by numerical examples, the abstractions generated by the proposed method contain fewer spurious transitions than those obtained from Lyapunov-based methods and therefore are more likely to render the control synthesis problem realizable. Future work will combine the abstraction procedures presented in this paper, which take into account local dynamics, with automated refinement procedures to mitigate potential state explosion problem.

\bibliographystyle{IEEEtran}
\bibliography{rbabst}             

\begin{thebibliography}{10}
\providecommand{\url}[1]{#1}
\csname url@samestyle\endcsname
\providecommand{\newblock}{\relax}
\providecommand{\bibinfo}[2]{#2}
\providecommand{\BIBentrySTDinterwordspacing}{\spaceskip=0pt\relax}
\providecommand{\BIBentryALTinterwordstretchfactor}{4}
\providecommand{\BIBentryALTinterwordspacing}{\spaceskip=\fontdimen2\font plus
\BIBentryALTinterwordstretchfactor\fontdimen3\font minus
  \fontdimen4\font\relax}
\providecommand{\BIBforeignlanguage}[2]{{%
\expandafter\ifx\csname l@#1\endcsname\relax
\typeout{** WARNING: IEEEtran.bst: No hyphenation pattern has been}%
\typeout{** loaded for the language `#1'. Using the pattern for}%
\typeout{** the default language instead.}%
\else
\language=\csname l@#1\endcsname
\fi
#2}}
\providecommand{\BIBdecl}{\relax}
\BIBdecl

\bibitem{li2015computing}
Y.~Li, J.~Liu, and N.~Ozay, ``Computing finite abstractions with robustness
  margins via local reachable set over-approximation,'' in \emph{IFAC
  Conference on Analysis and Design of Hybrid Systems (ADHS)}, 2015.

\bibitem{Alur00}
R.~Alur, T.~A. Henzinger, G.~Lafferriere, and G.~J. Pappas, ``Discrete
  abstractions of hybrid systems,'' in \emph{Proc. IEEE}, vol.~88, 2000, pp.
  971--984.

\bibitem{Kloetzer08}
M.~Kloetzer and C.~Belta, ``A fully automated framework for control of linear
  systems from temporal logic specifications,'' \emph{IEEE Trans. Automat.
  Control}, vol.~53, no.~1, pp. 287--297, Feb. 2008.

\bibitem{YordanovTCBB12}
B.~Yordanov, J.~T{\r u}mov{\'a}, I.~{\v C}ern{\'a}, J.~Barnat, and C.~Belta,
  ``Temporal logic control of discrete-time piecewise affine systems,''
  \emph{IEEE Trans. on Automatic Control}, vol.~57, no.~6, pp. 1491--1504,
  2012.

\bibitem{OzayLPM13}
N.~Ozay, J.~Liu, P.~Prabhakar, and R.~M. Murray, ``Computing augmented finite
  transition systems to synthesize switching protocols for polynomial switched
  systems,'' in \emph{American Control Conference (ACC), 2013}, 2013, pp.
  6237--6244.

\bibitem{LiuOTM13}
J.~Liu, N.~Ozay, U.~Topcu, and R.~M. Murray, ``Synthesis of reactive switching
  protocols from temporal logic specifications,'' \emph{IEEE Trans. Automat.
  Control}, vol.~58, no.~7, pp. 1771--1785, 2013.

\bibitem{PolaGT08}
G.~Pola, A.~Girard, and P.~Tabuada, ``Approximately bisimilar symbolic models
  for nonlinear control systems,'' \emph{Automatica}, vol.~44, no.~10, pp.
  2508--2516, 2008.

\bibitem{GirardPT10}
A.~Girard, G.~Pola, and P.~Tabuada, ``Approximately bisimilar symbolic models
  for incrementally stable switched systems,'' \emph{IEEE Trans. Automat.
  Control}, vol.~55, no.~1, pp. 116--126, 2010.

\bibitem{ZamaniPMT12}
M.~Zamani, G.~Pola, M.~M. Jr., and P.~Tabuada, ``Symbolic models for nonlinear
  control systems without stability assumptions,'' \emph{IEEE Trans. Automat.
  Control}, vol.~57, no.~7, pp. 1804--1809, 2012.

\bibitem{TabuadaBook09}
P.~Tabuada, \emph{Verification and Control of Hybrid Systems - A Symbolic
  Approach}.\hskip 1em plus 0.5em minus 0.4em\relax Springer, 2009.

\bibitem{Angeli02}
D.~Angeli, ``A lyapunov approach to incremental stability properties,''
  \emph{IEEE Trans. Automat. Control}, vol.~47, no.~3, pp. 410--421, 2002.

\bibitem{LiuO14}
J.~Liu and N.~Ozay, ``Abstraction, discretization, and robustness in temporal
  logic control of dynamical systems,'' in \emph{Proc. of the 17th
  International Conference on Hybrid Systems: Computation and Control}, 2014,
  pp. 293--302.

\bibitem{Althoff10}
M.~Althoff, ``Reachability analysis and its application to the safety
  assessment of autonomous cars,'' Ph.D. dissertation, Technische
  Universit{\"a}t M{\"u}nchen, 2010.

\bibitem{Althoff14}
M.~Althoff and B.~H. Krogh, ``Reachability analysis of nonlinear
  differential-algebraic systems,'' \emph{IEEE Trans. Automat. Control},
  vol.~59, no.~2, pp. 371--383, Feb. 2014.

\bibitem{KhalilBook02}
H.~K. Khalil, \emph{Nonlinear systems}.\hskip 1em plus 0.5em minus 0.4em\relax
  Prentice Hall, 2002.

\bibitem{ClarkeBook00}
E.~M. Clarke, O.~Grumberg, and D.~A. Peled, \emph{Model Checking}.\hskip 1em
  plus 0.5em minus 0.4em\relax The MIT Press, 2000.

\bibitem{GuernicG10}
C.~L. Guernic and A.~Girard, ``Reachability analysis of linear systems using
  support functions,'' \emph{Nonlinear Analysis: Hybrid Systems}, vol.~4,
  no.~2, pp. 250--262, 2010.

\bibitem{Girard05}
A.~Girard, ``Reachability of uncertain linear systems using zonotopes,'' in
  \emph{Proc. of the 8th International Conference on Hybrid Systems:
  Computation and Control}, 2005, pp. 291--305.

\bibitem{GirardGM06}
A.~Girard, C.~L. Guernic, and O.~Maler, ``Efficient computation of reachable
  sets of linear time-invariant systems with inputs,'' in \emph{Proc. of the
  9th International Conference on Hybrid Systems: Computation and Control},
  2006, pp. 257--271.

\end{thebibliography}
\end{document}